\newcommand{\ignore}[1]{}
\newlist{inparaenum}{enumerate}{2}
\setlist[inparaenum]{nosep}
\setlist[inparaenum,1]{label=\bfseries\arabic*.}
\setlist[inparaenum,2]{label=\arabic{inparaenumi}\emph{\alph*})}
\newcommand{\be}{\begin{equation}}
\newcommand{\ee}{\end{equation}}
\theoremstyle{plain}
\newtheorem{theorem}{Theorem}[section]
\newtheorem{corollary}{Corollary}
\newtheorem{proposition}{Proposition}[section]
\normalsize\title{Optimizing Preventive and Reactive Defense Resource Allocation with Uncertain Sensor Signals}
\author{
	Faezeh Shojaeighadikolaei, Shouhuai Xu, Keith Paarporn
	\thanks{F. Shojaeighadikolaei, S. Xu, and K. Paarporn are with the Department of Computer Science, University of Colorado Colorado Springs. This work was supported in part by NSF grant \#ECCS-2346791. Contact: \texttt{ \{fshojaei,sxu,kpaarpor\}@uccs.edu}.
	}
}
\begin{document}

\maketitle

\begin{abstract}
Cyber attacks continue to be a cause of concern despite advances in cyber defense techniques. Although cyber attacks cannot be fully prevented, standard decision-making frameworks typically focus on how to prevent them from succeeding,  without considering the cost of cleaning up the damages incurred by successful attacks. This motivates us to investigate a new resource allocation problem formulated in this paper: The defender must decide how to split its investment between preventive defenses, which aim to harden nodes from attacks, and reactive defenses, which aim to quickly clean up the compromised nodes. This encounters a challenge imposed by the uncertainty associated with the observation, or sensor signal, whether a node is truly compromised or not; this uncertainty is real because attack detectors are not perfect.
We investigate how the quality of sensor signals impacts the defender's strategic investment in the two types of defense, and ultimately the level of security that can be achieved. In particular, we show that the optimal investment in preventive resources increases, and thus reactive resource investment decreases, with higher sensor quality. We also show that the defender's performance improvement, relative to a baseline of no sensors employed, is maximal when the attacker can only achieve low attack success probabilities.

\end{abstract}

\section{Introduction}
It has proven impossible to prevent cyber attacks from succeeding
\cite{Cohen1988} even without considering human factors, which usher in another dimension of vulnerabilities  \cite{Manshaei2013} for which there are no adequate defenses \cite{XuPIEEE2024}. 
This highlights the importance of collectively employing multiple kinds of defenses \cite{Zhu2024, Linkov2019, Xu2019, Zhu2015,Kazeminajafabadi,10896861, Behnam2022}. 
Recently, there has been much attention on differentiating between a system's robustness (ability to withstand attacks) and resilience (ability to recover from successful attacks). In this paper, we investigate their interplay via two kinds of cyber defenses:  {\em preventive defenses}, such as access control mechanisms that contributes to system robustness; and {\em reactive defenses}, such as clean-up processes.

One important research question  is: {\em How should a defender optimally invest limited resources towards both  preventive and reactive defenses?} At one extreme, total investment in preventive defenses minimizes chances for potential attacks to succeed, but leaves no resources to recover from compromises in the event that they occur. On the other extreme, total investment in reactive defense maximizes its capability to recover from compromises, but leaves systems completely vulnerable to compromise. These tradeoffs signify an emerging paradigm in cyber defense decision-making for security that warrants systematic investigation, given that previous decision-making frameworks have primarily focused on preventive defenses \cite{Linkov2019,Paarporn2023,Iliaev2023, Chen2020,Miehling2018}.

The collective use of preventive and reactive defenses has been investigated in recent works \cite{Linkov2019,Paarporn2023,Iliaev2023, Chen2020,Miehling2018}.  For example, \cite{Chen2020} considers strategic attacker-defender interactions in dynamic multi-stage settings, where preventive design, adversarial attacks, and post-attack recovery are modeled and analyzed via subgame perfect equilibria. Closest to the present paper are studies on the allocation of the two types of resources over multiple nodes \cite{Paarporn2023}, which derives the Nash equilibrium under the assumption that the defender does not receive any signals regarding which nodes are compromised. Another thread of research considers preventive and reactive defenses in the context of dynamic computer malware spreading in complex networks \cite{WangSRDS03,XuTAAS2012,XuIEEETNSE2018,XuIEEEACMToN2019,Han2020}.

The present paper is specifically interested in studying how the balance between investing in these two types of defenses may depend on the quality of the defender's sensing (i.e., attack-detection) capabilities. Here, we consider a scenario where the defender, with a limited budget, first allocates resources on preventive defenses to harden nodes (e.g., computers) from cyber attacks. More investment in preventive defenses decreases the probability that attacks will be successful, but leaves fewer resources to reactive defenses. After the attacks are waged, the defender receives noisy signals regarding the nodes' security states (i.e., compromised or not), and then decides how to allocate the remaining resources on reactive defenses. In our model, more investment in reactive defenses decreases a node's expected time to recover from the compromised state. The defender's overall objective is to minimize the costs incurred by nodes in the compromised state for long periods of time.

The informational elements in our model reflect real-world reactive defense mechanisms such as intrusion or attack detectors, which can suffer from inaccuracies and uncertainties. These features have been considered in some cybersecurity models, such as belief-based and stochastic models for investigating security investments under uncertainty~\cite{Mai2022, Zhu2015}. By contrast, we seek to understand the impact that the defender's informational capabilities on promoting overall security, and in doing so, characterize optimal information-aware resource allocation policies.

The present study makes the following contributions. First, we propose a novel defender-centric model for the preventive and reactive defenses trade-off under uncertainty.
We show that the optimal preventive defense investment increases, and thus reactive defense investment decreases, with the quality of the sensors as described by true-positive and false-positive rates (Theorem~\ref{thm:xp_total_vs_pq_n_nodes}). We then quantify the percentage improvement in the expected cost, and how this depends on the sensors' true-positive and false-positive rates (Theorem~\ref{thm:improvement_sensor}). Lastly, we characterize how the attack strength impacts the defender's percent improvement, relative to the baseline scenario where it has uninformative sensors (Theorem~\ref{thm:improvement_gamma}). We conduct numerical simulations to confirm our theoretical results and illustrate the practical impact of sensor quality on optimal budget allocation.

The rest of the paper is organized as follows. Section \ref{sec:model} introduces our model.
Section \ref{sec:result} presents our main theoretical and empirical results. 
Section \ref{sec:con} concludes the paper with a discussion on implications and future directions.

\section{Model}
\label{sec:model}

We study the following two-stage cyber defense resource allocation problem. There are \(n\) nodes, representing computers or devices, which may be compromised by attackers. At each node, the defender can employ: (i) {\em preventive} defense, such as access control, to make it harder for attackers to compromise the node; and/or (ii) {\em reactive} defense, such as intrusion detector or {\em sensor} for succinctness, to detect that a node has been compromised and then take appropriate actions (e.g., cleaning up a compromised node). In practice, sensors are not perfect, with a known true-positive rate and a known false-positive rate.
Suppose the defender has a total budget \(X > 0\) to be spent on preventive defenses and reactive defenses. The attack-defense interaction has two stages.
Figure~\ref{fig:system_diagram} illustrates the model, with model parameters summarized in Table~\ref{tab:notation}. 

\subsection{Stage 1: Preventive defense} 

The defender selects an allocation of preventive resources, \(X_{P,i} \ge 0\) to harden node \(i\), where $1\leq i \leq n$, to reduce its probability of getting compromised, and such that $\sum_{i=1}^n X_{P,i} \leq X$. We assume that an array of attack resources $Y_i$ are allocated to the nodes. These allocations determine the probability that each node becomes compromised: in this work, we define the compromise probability as
\begin{align}
\gamma_i = \frac{Y_i}{Y_i + X_{P,i}}.
\label{eq:prior}
\end{align}
which is decreasing in the investment of preventive resources $X_{P,i}$. The form of \eqref{eq:prior} is known as the Tullock contest success function~\cite{Contest-success-functions-96,Paarporn2023,Gordon2002,Iliaev2023}, which is often adopted to reflect breach probabilities.

\subsection{Stage 2: Reactive defense} 

This stage consists of four components.

\ignore{
the defender allocates a budget \(X_{R,i} \ge 0\), where $1\leq i \leq n$, {\color{red}to clean up node $i$, meaning that  \(X_{R,i} > 0\) implies $S_i=1$ and \(X_{R,i} = 0\) implies $S_i=0$. Note that the defender may result in cleaning up a secure (i.e., non-compromised) node because the sensor has false-positives (i.e., $S_i=1$ does not necessarily mean node $i$ is compromised).}
}

\begin{figure*}[!htbp]
    \centering
    \includegraphics[width=0.9\linewidth]{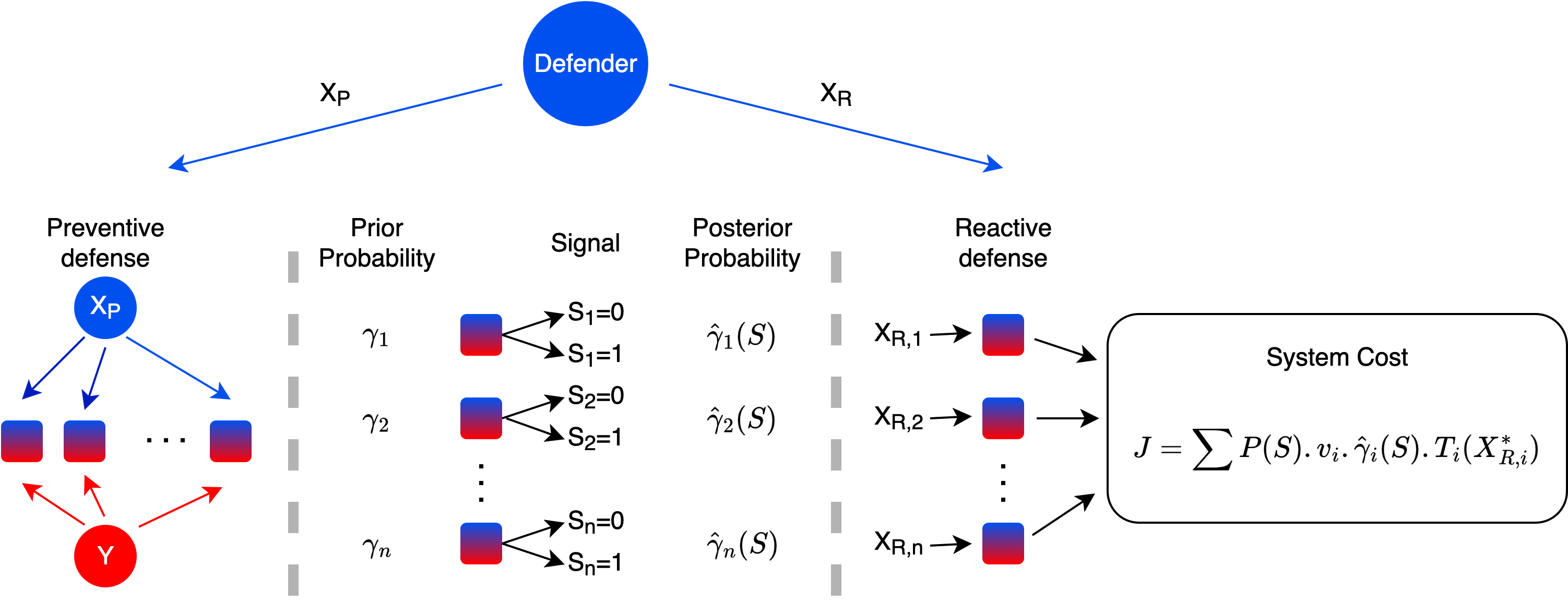}
    \caption{Illustration of the two-stage model.  
    \textbf{Stage 1:} The defender allocates preventive resources $X_{P,i}$ across nodes before the attack, aiming to reduce compromise probabilities. The compromise status (i.e. compromised or not) of each node is determined by the probabilities $\gamma_i$. 
    \textbf{Stage 2:} After receiving noisy sensor signals $S_i$ regarding the nodes' status, the defender updates its beliefs on the nodes' compromise status via Bayesian inference and then allocates reactive resources $X_{R,i}$ in order to minimize the expected recovery cost, which is given by the total expected time each node spends in a compromised state.}

    \label{fig:system_diagram}
\end{figure*}

\subsubsection{Sensor Signaling}

Each node \(i\) is monitored by a binary sensor emitting a signal \(S_i \in \{0,1\}\). A signal \(S_i = 1\) suggests possible node compromise, whereas \(S_i = 0\) indicates the node is likely secure. Due to sensor imperfections, noise affects observations. Let \(p_i = \mathbb{P}(S_i = 1 \mid \text{compromised})\) be the true-positive rate, denoting the probability the sensor accurately identifies a compromised node. In contrast, \(q_i = \mathbb{P}(S_i = 1 \mid \text{safe})\) represents the false-positive rate, which describes the chance the sensor wrongly reports a safe node as compromised.

Given the prior probability \(\gamma_i\) that node \(i\) is actually compromised, the signal distribution can be expressed as a mixture:
\begin{align}
\mathbb{P}(S_i = 1) &= \gamma_i p_i + (1 - \gamma_i) q_i, \\
\mathbb{P}(S_i = 0) &= \gamma_i (1 - p_i) + (1 - \gamma_i)(1 - q_i).
\end{align}

\begin{figure}[!htbp]
    \centering
\includegraphics[width=0.75\linewidth]{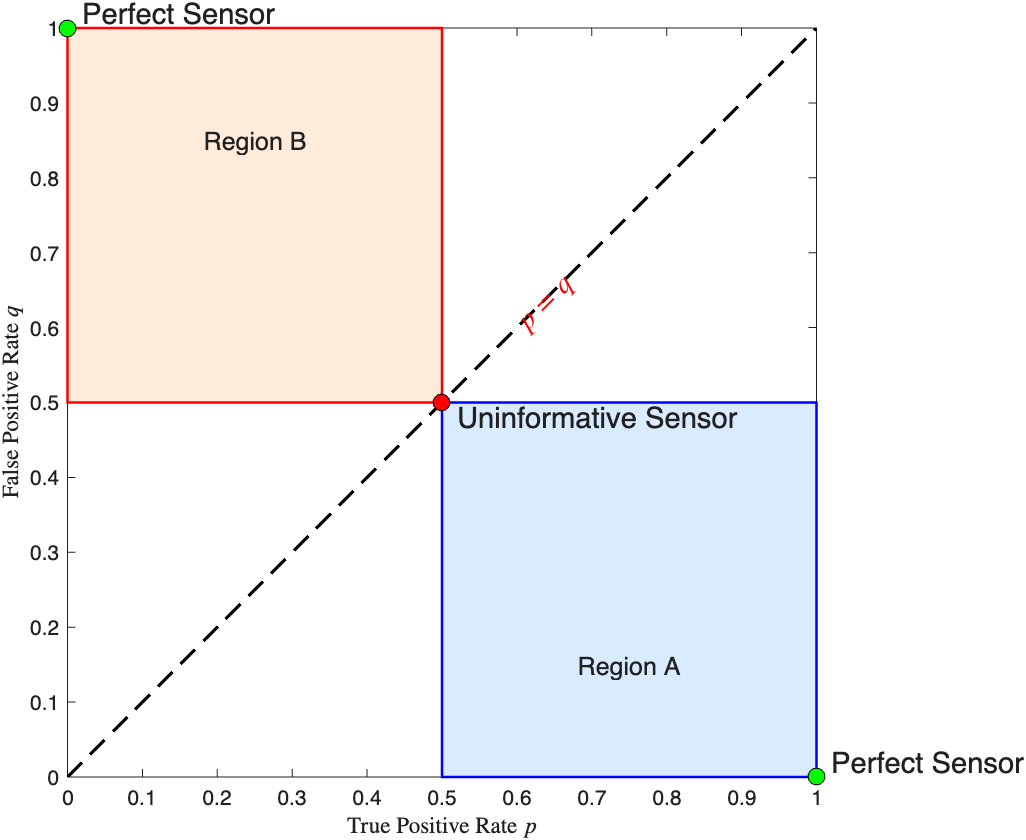}
    \caption{Sensor parameter space: we only need to  analyze the region induced by true-positive rate \(p \in [0.5,1]\) and false-positive rate \(q \in [0,0.5]\), because the diagonal \(p = q\) represents uninformative sensors and the region above this line is symmetric to the region below it via signal inversion.}
\label{fig:region_plot}
\end{figure}

As shown in Figure~\ref{fig:region_plot}, we restrict our analysis to the region:
\[
p_i \in [\frac{1}{2}, 1], \quad q_i \in [0, \frac{1}{2}]
\]
This region, denoted as \textit{Region A}, is justified as follows:

\begin{itemize}

    \item \textbf{\em Symmetry and equivalence:} The model exhibits an equivalence under the transformation \((p, q) \mapsto (1 - p, 1 - q)\), meaning that inverting the signal will flip its interpretation but preserves the information. Thus, studying the lower triangle \(p > q\) is sufficient to cover all cases without loss of generality.

    \item \textbf{\em Informativeness:} A sensor is informative only if \(p_i > q_i\), meaning that alarm is triggered when the node is compromised than when the node is secure; note that 
    sensors whose parameters are near the diagonal line \(p = q\) behave like random coin flipping and offer no value.

\ignore{   
    For example:
    \begin{itemize}
        \item \((p, q) = (1, 0)\): perfect sensor,
        \item \((p, q) = (\frac{1}{2}, \frac{1}{2})\): uninformative (no signal value),
        \item \((p, q) = (0, 1)\): misleading, but equivalent to flipping interpretation of 1 and 0.
    \end{itemize}
}
    
\end{itemize}
Thus, our investigation will focus on $Region A$.

\subsubsection{Bayesian Update}
\label{sec:bayesupdate}

Upon observing signal \(S_i\), the defender updates their belief on whether node $i$ is compromised by using Bayes’ rule \cite{probability}:
{\small
\begin{align}
\gamma_i(S_i) 
&= \frac{p_i^{S_i}(1 - p_i)^{1 - S_i} \cdot \gamma_i}
         {p_i^{S_i}(1 - p_i)^{1 - S_i} \cdot \gamma_i + q_i^{S_i}(1 - q_i)^{1 - S_i} \cdot (1 - \gamma_i)}.
\end{align}
}

\subsubsection{Reactive Defense Budget and Recovery Time}
\label{sec:reactivecost}

With probability $\hat{\gamma}_i(S_i)$, The defender will allocate reactive defense resources \(X_{R,i}\) to accelerate the recovery of node $i$ in the event of compromise.
We assume that the time required to recover a compromised node follows an exponential distribution, 
which is widely used in epidemic models and game-theoretic models (e.g. \cite{Van2013}).
Specifically, for node \(i\), given that it is truly compromised, we model the recovery time \(T_i\) as an exponential random variable with rate parameter $r_i(X_{R,i})$ given by:
\begin{align}
T_i &\sim \text{Exp}\left( r_i(X_{R,i}) \right), \notag \\
\text{where} \quad r_i(X_{R,i}) &= \frac{X_{R,i} + \delta_i}{X_{R,i} + \delta_i + \epsilon_i},
\label{eq:recovery_rate}
\end{align}
where \(\delta_i \ge 0\) is pre-exsisting defenses resources and \(\epsilon_i > 0\) is a half-saturation constant that governs how fast the recovery time decreases with the additional allocation of reactive resources. Thus, the expected recovery time is:
\begin{align}
T_i(X_{R,i}) = 1 + \frac{\epsilon_i}{\delta_i + X_{R,i}}.
\label{eq:Ti_cost}
\end{align}
Otherwise, if node \(i\) is not compromised, the recovery time is zero:
\[
    T_i = 0.
\]
This function is strictly decreasing and convex in \(X_{R,i}\), while reflecting diminishing returns on reactive defense budget.

\subsubsection{Defense Budget Allocation and the  Optimization Problem}
\label{sec:defenderopt}

Let \( X_P = (X_{P,1}, \dots, X_{P,n}) \) and \( XR = (X_{R,1}, \dots, X_{R,n}) \) denote the defender’s preventive and reactive defense budget allocation across the \(n\) nodes. For a given signal realization \( S = (S_1, \dots, S_n) \), the defender updates the posterior beliefs \( \hat{\gamma}_i(S_i) \) as described above, 
incurring the following cost:
\begin{equation}
J(S, X_P, X_R, p, q) := \sum_{i=1}^n v_i\, \hat{\gamma}_i(S_i)\, T_i(X_{R,i}).
\end{equation}
The total cost, averaged over all signal realizations, is:
\begin{equation}
J(X_P, X_R, p, q) := \sum_{S \in \{0,1\}^n} \mathbb{P}(S)\, J(S, X_P, X_R, p, q),
\end{equation}
where $\mathbb{P}(S)$ is the signal distribution:
\begin{align}
\mathbb{P}(S) := \prod_{i=1}^n 
& \left[ \gamma_i p_i + (1 - \gamma_i) q_i \right]^{S_i} \nonumber \\
& \times \left[ \gamma_i (1 - p_i) + (1 - \gamma_i)(1 - q_i) \right]^{1 - S_i}
\label{eq:signal_prob}
\end{align}
The optimization problem is for the defender to minimize this total cost subject to the resource constraint:
\begin{equation}
\min_{\substack{X_{P,i} \ge 0,\, X_{R,i} \ge 0 \\ \sum_{i=1}^n X_{P,i} + \sum_{i=1}^n X_{R,i} = X}} J(X_P, X_R, p, q).
\end{equation}
We observe that it is difficult to solve the optimization problem because the optimal solution
depends on the uncertainty imposed by the signals. 
We propose tackling the problem by leveraging the following {\em backward induction} strategy, in two steps.

\ignore{
\noindent{\bf Objective function.}
The objective of the study is to optimally allocate defense budget such that the expected time during which the nodes are compromised is minimal, under constraint
\begin{align}
\sum_{i=1}^{n} X_{P,i} + \sum_{i=1}^{n} X_{R,i} = X.
\end{align}
}

\noindent
\textbf{Step 1:  Reactive Defense Budget Allocation Optimization.}
Given preventive defense budget allocation \( X_P \) and a signal realization \( S \), the defender selects the reactive defense budget allocation \( X_R \) to minimize the recovery cost:
\begin{equation}
X_R^* := \arg\min_{\substack{X_{R,i} \ge 0 \\ \sum X_{R,i} = X - \sum X_{P,i}}} J(S, X_P, X_R, p, q)
\label{eq:stage2}
\end{equation}
This is a convex optimization problem that admits a closed-form solution via the Karush–Kuhn–Tucker (KKT) conditions, as given by Proposition~\ref{prop:kkt}.

\begin{proposition}[Closed-Form Optimal Reactive Defense Budget Allocation]
\label{prop:kkt}
Let \( X_P \in \mathbb{R}_+^n \) be fixed and \( S = (S_1, \dots, S_n) \in \{0,1\}^n \) be the observed signal vector. The optimal reactive defense budget allocation that solves Eq.~\eqref{eq:stage2} is given by:
{\small
\begin{align}
X_{R,i}^* := \max\left\{
\frac{\sqrt{v_i \hat{\gamma}_i(S_i) \epsilon_i}}{\sum_{k \in \mathcal{A}} \sqrt{v_k \hat{\gamma}_k(S_k) \epsilon_k}}
\left(X_R + \sum_{k \in \mathcal{A}} \delta_k\right) - \delta_i,\, 0
\right\},
\end{align}
}
where \( X_R = X - \sum_{i=1}^n X_{P,i} \) is the reactive defense budget and \( \mathcal{A} = \{ i : X_{R_,i}^* > 0 \} \) is the set of actively recovered nodes.
\end{proposition}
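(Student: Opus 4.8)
The plan is to recognize the Stage-2 problem in \eqref{eq:stage2} as a separable convex program and solve it by Lagrange multipliers, revealing a water-filling structure. First I would discard the signal-independent constant: substituting \eqref{eq:Ti_cost} into the cost gives $J(S,X_P,X_R,p,q) = \sum_i v_i\hat{\gamma}_i(S_i) + \sum_i \frac{v_i\hat{\gamma}_i(S_i)\epsilon_i}{\delta_i + X_{R,i}}$, and since $X_P$ and $S$ are fixed, the first sum is a constant and only the second term is relevant. Writing $c_i := v_i\hat{\gamma}_i(S_i)\epsilon_i \ge 0$, the problem reduces to minimizing $\sum_i c_i/(\delta_i + X_{R,i})$ over $\{X_{R,i} \ge 0,\ \sum_i X_{R,i} = X_R\}$, with $X_R = X - \sum_i X_{P,i}$.

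Next I would confirm convexity: each term $c_i/(\delta_i + X_{R,i})$ has second derivative $2c_i/(\delta_i + X_{R,i})^3 > 0$ on the feasible region, so the objective is convex, the feasible set is a convex polytope, and because all constraints are affine the KKT conditions are necessary, while convexity makes them sufficient for global optimality. I would then form the Lagrangian $L = \sum_i c_i/(\delta_i + X_{R,i}) + \lambda(\sum_i X_{R,i} - X_R) - \sum_i \mu_i X_{R,i}$ and record stationarity $-c_i/(\delta_i + X_{R,i})^2 + \lambda - \mu_i = 0$, dual feasibility $\mu_i \ge 0$, and complementary slackness $\mu_i X_{R,i} = 0$.

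The core computation is to solve these conditions. For an active node $i \in \mathcal{A}$ (i.e.\ $X_{R,i}^* > 0$), complementary slackness forces $\mu_i = 0$, so stationarity gives $\delta_i + X_{R,i}^* = \sqrt{c_i/\lambda}$, hence $X_{R,i}^* = \sqrt{c_i/\lambda} - \delta_i$; for an inactive node the same condition reads $c_i \le \lambda\delta_i^2$. Summing the active allocations against the budget $\sum_{i \in \mathcal{A}} X_{R,i}^* = X_R$ pins down the multiplier via $1/\sqrt{\lambda} = (X_R + \sum_{k \in \mathcal{A}}\delta_k)/\sum_{k \in \mathcal{A}}\sqrt{c_k}$, and substituting this back while restoring $c_i = v_i\hat{\gamma}_i(S_i)\epsilon_i$ yields exactly the stated formula, with the outer $\max\{\cdot,0\}$ encoding the inactive case.

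The main obstacle is that the active set $\mathcal{A}$ is defined self-referentially: the closed form presumes knowledge of $\mathcal{A}$, yet $\mathcal{A} = \{i : X_{R,i}^* > 0\}$ depends on $\lambda$, which in turn depends on $\mathcal{A}$. To close this gap cleanly I would establish a threshold characterization, observing from the KKT relations that node $i$ is active precisely when $c_i/\delta_i^2 > \lambda$; sorting the nodes by the ratio $c_i/\delta_i^2$ then shows the active set consists of those exceeding a single threshold. Existence and uniqueness of a consistent $\mathcal{A}$ follow from the strict monotonicity of the total allocated budget $\sum_{i \in \mathcal{A}}(\sqrt{c_i/\lambda} - \delta_i)$ as a function of $1/\sqrt{\lambda}$, and convexity guarantees that this consistent active set furnishes the unique global minimizer, completing the proof.
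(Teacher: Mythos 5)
Your proof is correct and follows exactly the route the paper indicates: the paper itself gives no written proof of this proposition, merely asserting that the Stage-2 problem ``is a convex optimization problem that admits a closed-form solution via the Karush--Kuhn--Tucker (KKT) conditions,'' and your derivation (separating off the constant $\sum_i v_i\hat{\gamma}_i(S_i)$, verifying convexity, and solving the KKT system to get the water-filling form) is a faithful and complete elaboration of precisely that argument. Your final step resolving the self-referential definition of the active set $\mathcal{A}$ via the threshold criterion $c_i/\delta_i^2 > \lambda$ supplies a detail the paper leaves implicit, so your write-up is, if anything, more complete than the original.
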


\noindent
\textbf{Step 2: Preventive Defense Budget Allocation Optimization.}
The expected cost under optimal reactive defense budget allocation is:
\begin{equation}
J^*(X_P; p, q) := \sum_{S \in \{0,1\}^n} \mathbb{P}(S)\, J(S, X_P, X_R^*, p, q).
\end{equation}
The defender selects the preventive defense budget allocation \( X_P \) to minimize the total expected cost via:
\begin{equation}
X_P^* = \arg\min_{\substack{X_{P,i} \ge 0 \\ \sum X_{P,i} \le X}} J^*(X_P; p, q).
\label{eq:preventivecost}
\end{equation}
The resulting minimum total cost is denoted by:
\begin{equation}
J^*(p, q) := J^*(X_P^*; p, q)
\label{eq:totalCost}
\end{equation}

\vspace{0.5em}
\noindent
This two-stage optimization structure defines the core logic and analysis of our model. Note that Step 1 
is analytically tractable via Proposition~\ref{prop:kkt}, while Step 2
is nonconvex in general and thus can only be solved numerically.

\begin{table}[!htbp]
\centering
\caption{Model parameters}
\begin{tabularx}{\linewidth}{@{}lX@{}}
\hline
\(n\)                    & number of nodes, indexed by \(i = 1, \dots, n\) \\
\(Y_i > 0\)              & attacker effort on attacking node \(i\) \\
\(X_{P,i} \ge 0\)           & defender's preventive defense budget spent on hardening node \(i\) \\
\(X_{R,i} \ge 0\)           & defender's reactive defense budget spent on cleaning up node \(i\) \\
\(p_i, q_i \in [0,1]\)   & true-positive rate and false-positive rate of the sensor employed at node \(i\), with \(p_i > q_i\)  \\
\(\gamma_i\)             & prior probability that node $i$ is compromised despite the employment of preventive defense budget $X_{P,i}$  \\
\(S_i \in \{0,1\}\)      & binary signal by the sensor employed at node \(i\), where \(S_i=1\) suggests the node may be compromised, \(S_i=0\) suggests it may be safe. \\
\(\hat{\gamma}_i(S_i)\)        & posterior probability that node $i$ is compromised given signal \(S_i\) \\
\(\epsilon_i > 0\)       & half-saturation constant governing the marginal benefit of reactive defense \\
\(\delta_i \ge 0\)       & pre-existing defense resources \\
\(T_i(X_{R,i})\)            & expected recovery time at node \(i\)\\
\(v_i > 0\)              & valuation of node \(i\) measured as cost per time \\
\(P(S)\)                 & joint probability of signal vector \(S = (S_1, \dots, S_n)\) \\
\(J(S)\)               & recovery cost under signal realization \(S\) \\
\(J\)                  & expected recovery cost, averaged over all signal vectors \\
\hline
\end{tabularx}
\label{tab:notation}
\end{table}

\section{Main Result}
\label{sec:result}
\subsection{Tradeoff Between Preventive and Reactive Allocation}

First, we consider the problem of determining the optimal division of the defense budget between preventive and reactive strategy \( X_{P,i} \quad and \quad X_{R,i} \) given a signal with parameters \( p \) (the true-positive rate) and \( q \) (the false-positive rate).Intuitively, better quality sensors produce signals that are more informative, allowing the defender to estimate the compromise state more accurately. The following theorem formalizes this monotonic relationship between sensor quality and preventive allocation.

\begin{theorem}[Monotonicity of Preventive Effort in \( p \) and \( q \)]
\label{thm:xp_total_vs_pq_n_nodes}
Consider the attacker–defender game with \( n \) nodes, where each node \( i \) has:
\(Y_i > 0 \quad X_{P,i} \ge 0\ \quad \gamma_i > 0 \)

Let \( X_{P,i}^* = (X_{P,1}^*, \dots, X_{P,n}^*) \) denote the optimal preventive allocation that solves Eq.~\eqref{eq:preventivecost}.
Then, for each node \( i \), the optimal preventive allocation satisfies:
\begin{align}
    \frac{d}{dp} X_{P,i}^* \ge 0, \quad \frac{d}{dq} X_{P,i}^* \le 0.
\end{align}

\end{theorem}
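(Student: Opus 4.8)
The plan is to combine the backward‑induction value function of Step~2 with a comparative‑statics (implicit function theorem) argument applied to the first‑order conditions of Step~1, reducing the whole statement to the monotonicity of a single scalar ``information functional.''

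First I would substitute the closed form of Proposition~\ref{prop:kkt} into $J^*$ and simplify. Using the tower identity $\sum_{S_i}\mathbb{P}(S_i)\,\hat{\gamma}_i(S_i)=\gamma_i$ together with $\mathbb{P}(S_i)\,\hat{\gamma}_i(S_i)=p^{S_i}(1-p)^{1-S_i}\gamma_i$, the sensor‑independent ``baseline'' $\sum_i v_i\gamma_i$ separates out, and on the region where all nodes are active one obtains, with $D:=\sum_k\delta_k$ and $X_R=X-\sum_j X_{P,j}$,
\[
J^*(X_P;p,q)=\sum_i v_i\gamma_i(X_{P,i})+\frac{W(X_P;p,q)}{X_R+D},\qquad W=\sum_k a_k^2(\gamma_k-m_k^2)+\Big(\sum_k a_k m_k\Big)^2,
\]
where $a_k:=\sqrt{v_k\epsilon_k}$ and $m_k:=\mathbb{E}_{S_k}[\sqrt{\hat{\gamma}_k(S_k)}]$. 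Crucially, the dependence on $(p,q)$ enters \emph{only} through the $m_k$, and $m_k=\sqrt{\gamma_k}\big[\sqrt{p(\gamma_k p+(1-\gamma_k)q)}+\sqrt{(1-p)(\gamma_k(1-p)+(1-\gamma_k)(1-q))}\big]$.

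The key lemma I would then establish is that $m_k$ is nonincreasing in $p$ and nondecreasing in $q$ on Region~A. Since $\sqrt{\cdot}$ is concave and $\mathbb{E}[\hat{\gamma}_k]=\gamma_k$ is pinned independent of the sensor, a more informative signal is a mean‑preserving spread of the posterior, so Jensen forces $m_k$ downward; I would prove the monotone direction either by a Blackwell‑ordering argument in $(p,q)$ or by differentiating the explicit expression above. Because $\partial W/\partial m_j=2a_j\sum_{k\ne j}a_k m_k\ge 0$, this gives $\partial W/\partial p\le 0$ and $\partial W/\partial q\ge 0$; note the coupling weight $\sum_{k\ne j}a_k m_k$ vanishes when $n=1$, matching the \emph{weak} inequalities in the statement. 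I would then write the preventive first‑order condition, $v_i\gamma_i'(X_{P,i})+\tfrac{1}{X_R+D}\,\partial W/\partial X_{P,i}+\tfrac{W}{(X_R+D)^2}=0$, read $W/(X_R+D)^2$ as the marginal value of reactive budget, and differentiate the stationarity system in $p$ via the implicit function theorem, $dX_P^*/dp=-H^{-1}\,\partial_p\nabla_{X_P}J^*$, with $H$ the bordered Hessian, positive definite at a local minimizer by the second‑order conditions. The ``easy'' contribution $\tfrac{1}{(X_R+D)^2}\,\partial W/\partial p\le 0$ lowers the marginal value of reactive defense and pushes budget toward prevention.

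The main obstacle is the cross term $\partial^2 W/\partial X_{P,i}\partial p$: improving the sensor also shifts the marginal productivity of prevention through the priors $\gamma_i$ entering the posteriors, and this must not reverse the sign. I would control it by showing that $\partial W/\partial X_{P,i}$ and $\partial W/\partial p$ route through $m_i$ with the same positive coupling weights, so their interaction has a definite sign, and then use the positive‑definiteness of $H$ to conclude that every entry of $H^{-1}\partial_p\nabla_{X_P}J^*$ has the required sign. Two further technical points I would dispatch separately: the active set $\mathcal{A}$ is only piecewise constant, so I would argue on each region where $\mathcal{A}$ is locally fixed and invoke continuity of the envelope across boundaries; and Step~2 is globally nonconvex, so I would state the conclusion for any local minimizer satisfying the second‑order conditions, noting that the cleanest fully rigorous version is the one‑dimensional envelope argument for the \emph{total} preventive budget $\sum_i X_{P,i}^*$, from which the per‑node claim follows under symmetric nodes.
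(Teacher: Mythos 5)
Your reduction is correct, and it takes a genuinely different route from the paper. The paper's own proof is a four-step qualitative chain (posterior $\hat{\gamma}_i(1)$ increases in $p$ and decreases in $q$; the cost term increases in the posterior; $X_{P,i}$ is inversely related to $\gamma_i$; hence the defender raises $X_{P,i}$ as $p$ rises) with no comparative statics of the actual optimizer. By contrast, your closed form $J^*=\sum_i v_i\gamma_i + W/(X_R+D)$ on the all-active region, the observation that $(p,q)$ enters only through $m_k=\mathbb{E}[\sqrt{\hat{\gamma}_k(S_k)}]$, the Jensen/Blackwell lemma that $m_k$ decreases as the sensor improves (higher $p$, lower $q$ does Blackwell-dominate in Region A), and the $n=1$ degeneracy matching the weak inequalities are all correct --- I verified the algebra, including $\partial W/\partial m_j = 2a_j\sum_{k\ne j}a_k m_k \ge 0$. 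Up to this point your argument is substantially more rigorous than the paper's.

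The gap is in the last step, and it is twofold. First, the sign of the cross term is never established: writing
\begin{equation*}
\frac{\partial W}{\partial X_{P,i}} = a_i^2\frac{\partial \gamma_i}{\partial X_{P,i}} + 2a_i\Bigl(\sum_{k\ne i}a_k m_k\Bigr)\frac{\partial m_i}{\partial X_{P,i}},
\end{equation*}
its $p$-derivative contains both $\bigl(\sum_{k\ne i}a_k\,\partial m_k/\partial p\bigr)\,\partial m_i/\partial X_{P,i}$, which is actually \emph{nonnegative} (product of two nonpositive factors, i.e., the wrong direction for your argument), and a term in the mixed derivative $\partial^2 m_i/\partial\gamma_i\,\partial p$, whose sign is not covered by your lemma at all. ``Routes through $m_i$ with the same positive coupling weights'' does not resolve this. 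Second, even granting a componentwise sign for $\partial_p\nabla_{X_P}J^*$, positive definiteness of $H$ does \emph{not} imply that $H^{-1}$ maps sign-definite vectors to sign-definite vectors: e.g.,
\begin{equation*}
H=\begin{pmatrix}2 & 1.5\\ 1.5 & 2\end{pmatrix}
\end{equation*}
is positive definite, yet $H^{-1}(1,0)^{\top}$ has a negative entry. Entrywise nonnegativity of $H^{-1}$ requires $H$ to be an M-matrix (nonpositive off-diagonal cross-partials $\partial^2 J^*/\partial X_{P,i}\partial X_{P,j}\le 0$), which is not obvious here because nodes are coupled both through the shared budget $X_R$ and through the square $(\sum_k a_k m_k)^2$ in $W$; alternatively one would need a Topkis-style supermodularity argument. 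Your fallback --- the one-dimensional envelope argument for the total budget $\sum_i X_{P,i}^*$ under symmetric nodes --- is the piece that can be made rigorous, but it proves a strictly weaker statement than the theorem, which asserts per-node monotonicity for heterogeneous $Y_i,v_i,\epsilon_i,\delta_i$. So as written the proposal does not close the claim.
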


\begin{proof}
We do not assume a closed-form solution for \( X_{P,i}^* \). Instead, the structural relationships in the model are analyzed.

\textbf{Step 1: Posterior monotonicity.}  
For fixed \( \gamma_i \in (0,1) \), the posterior given \( S_i = 1 \) is:
\begin{align}
\hat{\gamma}_i(1) = \frac{p \cdot \gamma_i}{p \cdot \gamma_i + q \cdot (1 - \gamma_i)}.
\end{align}
Taking the derivative:
\begin{align}
\frac{d}{dp} \hat{\gamma}_i(1) > 0, \quad \frac{d}{dq} \hat{\gamma}_i(1) < 0.
\end{align}
Thus, \( \hat{\gamma}_i(1) \) becomes more reliable with higher \( p \) and less reliable with higher \( q \).

\textbf{Step 2: Defender incentive to reduce compromise probability.}  
The cost term for each node is:
\begin{align}
v_i \cdot \hat{\gamma}_i(S_i) \cdot T_i(X_{R,i}),
\end{align}
where \( T_i \) exhibits convex properties and diminishes as a function of \( X_{R,i} \), with \( X_{R,i} \) being adaptively determined based on the observed signals. This cost increases relative to \( \hat{\gamma}_i(S_i) \), thereby incentivizing the defender to minimize the initial \( \gamma_i \) and its responsiveness through strategic measures.

\textbf{Step 3: Inverse relation between \(X_P\) and \( \gamma_i \).}  
From:
\begin{align}
\gamma_i = \frac{Y_i}{Y_i + X_{P,i}} \quad \Leftrightarrow \quad X_{P,i} = \frac{Y_i (1 - \gamma_i)}{\gamma_i},
\end{align}
we conclude that \( X_{P,i} \) is strictly decreasing in \( \gamma_i \).

\textbf{Step 4: Final monotonicity result.}  
As \( p \uparrow \Rightarrow \hat{\gamma}_i(1) \uparrow \Rightarrow \) defender lowers \( \gamma_i \Rightarrow X_{P,i} \uparrow \).  
As \( q \uparrow \Rightarrow \hat{\gamma}_i(1) \downarrow \Rightarrow \) defender tolerates higher \( \gamma_i \Rightarrow XP_i \downarrow \).

Therefore, \( X_{P,i}^* \) increases in \( p \), decreasing in \( q \), which completes the proof.
\end{proof}

\begin{corollary}[Perfect Sensors Maximize Preventive Allocation]
\label{cor:perfect_sensors}
Under the same model, the total preventive allocation is maximized when \( p = 1 \), \( q = 0 \):
\begin{align}
\sum_{i=1}^n X_{P,i}^*(p = 1, q = 0) = \max_{p, q \in (0,1)} \sum_{i=1}^n X_{P,i}^*(p, q).
\end{align}
\end{corollary}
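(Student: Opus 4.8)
The plan is to derive the corollary directly from the per-node monotonicity established in Theorem~\ref{thm:xp_total_vs_pq_n_nodes}, lift it to the aggregate preventive allocation, and then invoke the signal-inversion symmetry to cover the whole parameter square. First I would observe that a sum of functions that are each non-decreasing in $p$ (respectively non-increasing in $q$) is itself non-decreasing in $p$ (respectively non-increasing in $q$); hence Theorem~\ref{thm:xp_total_vs_pq_n_nodes} immediately yields
\begin{align}
\frac{d}{dp}\sum_{i=1}^n X_{P,i}^*(p,q) &\ge 0, & \frac{d}{dq}\sum_{i=1}^n X_{P,i}^*(p,q) &\le 0.
\end{align}
Thus the total preventive allocation is monotone in each sensor parameter: increasing in the true-positive rate and decreasing in the false-positive rate.

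Next I would restrict attention to \textit{Region A}, i.e. $p \in [\tfrac12,1]$ and $q \in [0,\tfrac12]$, as justified in Section~\ref{sec:model}. On this box a function that is non-decreasing in $p$ and non-increasing in $q$ attains its maximum at the corner where $p$ is largest and $q$ is smallest, namely $(p,q)=(1,0)$. Hence $\sum_i X_{P,i}^*$ is maximized over Region A exactly at the perfect-sensor parameters.

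The remaining step is to extend this conclusion from Region A to the full square. For this I would use the equivalence under the transformation $(p,q)\mapsto(1-p,1-q)$ noted in Section~\ref{sec:model}: signal inversion merely relabels the two sensor outputs but preserves the information content, so the defender's optimal preventive allocation is unchanged, giving $X_{P,i}^*(p,q)=X_{P,i}^*(1-p,1-q)$ for every $i$. Consequently the maximum of the total allocation over the upper triangle $p<q$ coincides with its maximum over the lower triangle $p>q$, and by the monotonicity above the latter is attained at $(1,0)$. Since the diagonal $p=q$ corresponds to uninformative sensors and is therefore dominated, the global maximum over $(0,1)^2$ is attained at $(p,q)=(1,0)$, which is precisely the claim.

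I expect the main obstacle to be this last step: making the symmetry invariance $X_{P,i}^*(p,q)=X_{P,i}^*(1-p,1-q)$ fully precise requires tracing how signal inversion propagates through the Bayesian update, the posterior $\hat{\gamma}_i(S_i)$, and the averaged objective $J^*(X_P;p,q)$, and checking that the induced relabeling of the realizations $S\in\{0,1\}^n$ leaves the minimized cost invariant. A secondary subtlety is that $(1,0)$ lies on the boundary of the open domain $(0,1)^2$, so the equality should be read as a supremum attained in the limit (or the domain taken as its closure); the monotonicity guarantees this limiting value exists and equals the stated corner.
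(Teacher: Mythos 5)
Your proposal is correct and takes essentially the same route as the paper: the paper's entire proof is the one-line observation that the claim ``follows directly from the monotonicity of $X_P$ in $p$ and $q$'' from Theorem~\ref{thm:xp_total_vs_pq_n_nodes}, i.e., a function non-decreasing in $p$ and non-increasing in $q$ is maximized at the corner $(p,q)=(1,0)$. Your extra steps---summing the per-node monotonicity, using the signal-inversion symmetry $(p,q)\mapsto(1-p,1-q)$ to cover the region $p<q$, and noting that $(1,0)$ sits on the boundary of the open square so the maximum is really a supremum attained in the limit---are refinements that the paper simply omits.
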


\begin{proof}
This follows directly from the monotonicity of \(X_P\) in \( p \) and \( q \).
\end{proof}

Theorem~\ref{thm:xp_total_vs_pq_n_nodes} shows that as the sensor becomes more accurate---through a higher true-positive rate $p$ or a lower false-positive rate $q$---the defender allocates more resources to prevention. This confirms our central hypothesis. This increases the effectiveness of reactive allocation and incentivizes the defender to allocate more budget toward prevention, knowing that sensor feedback will support precise post-attack response. In contrast, poor sensor quality leads to weaker reliability of reactive decisions, provoking the defender to allocate less resources to prevention and reserve more resources for generic reactive responses.

Significantly, this rise in preventive measures is evident both overall and at the individual node level. Enhanced sensor accuracy leads to an increased optimal allocation for all nodes, indicating a strategic realignment.

In order to empirically validate this result, we numerically solve the optimal preventive allocation \(X_P^*\) for a grid of values \(p,q\) and a fixed total budget \(X = 10\), using a numerical optimization method. The total preventive effort can be seen in Figures~\ref{fig:xp_vs_p} and~\ref{fig:xp_vs_q} to increase with \(p\) and decrease with \(q\), confirming the monotonicity of the optimal allocation predicted by the theorem.

\begin{figure}[]
\centerline{\includegraphics[width=0.47\textwidth]{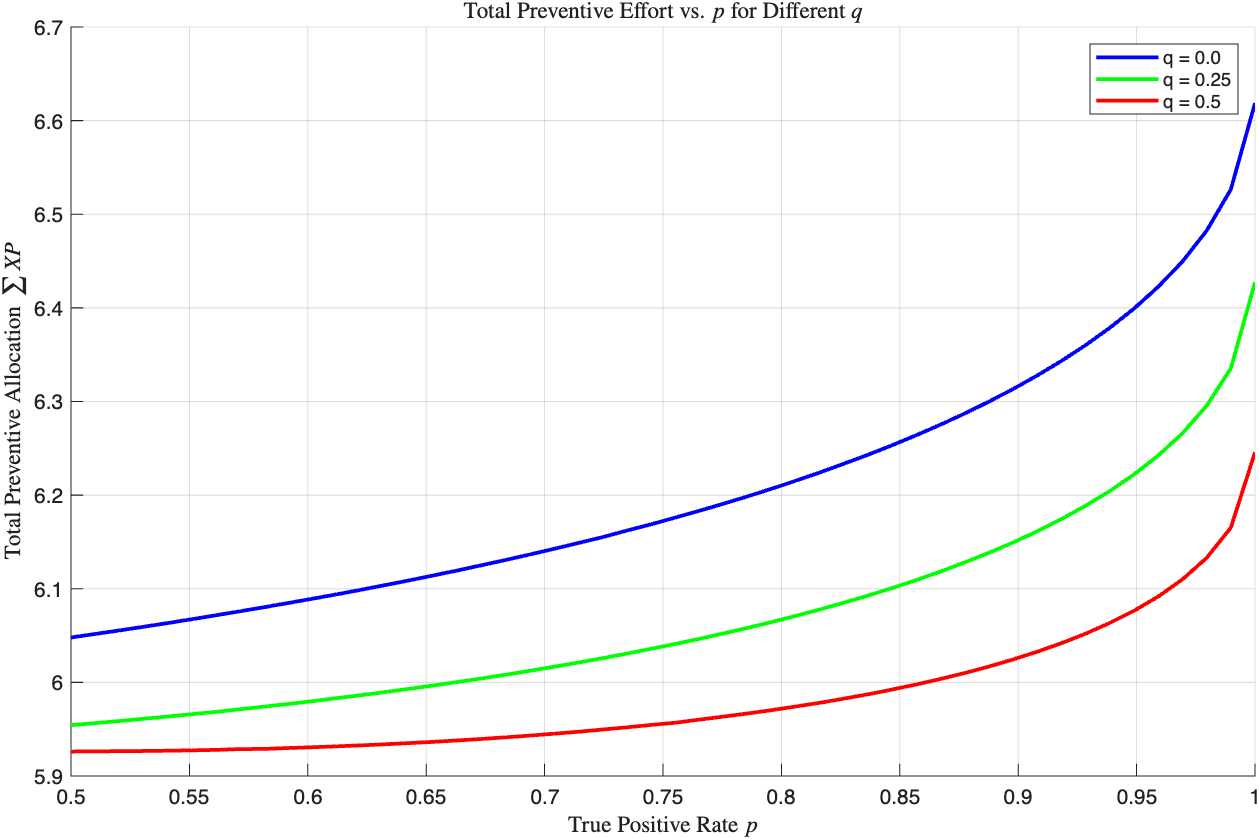}}
\caption{Total preventive allocation \( \sum X_{P,i} \) vs. true-positive rate \( p \), for fixed false-positive rates \( q \in \{0.0, 0.25, 0.5\} \).}
\label{fig:xp_vs_p}
\end{figure}

\begin{figure}[]
\centerline{\includegraphics[width=0.47\textwidth]{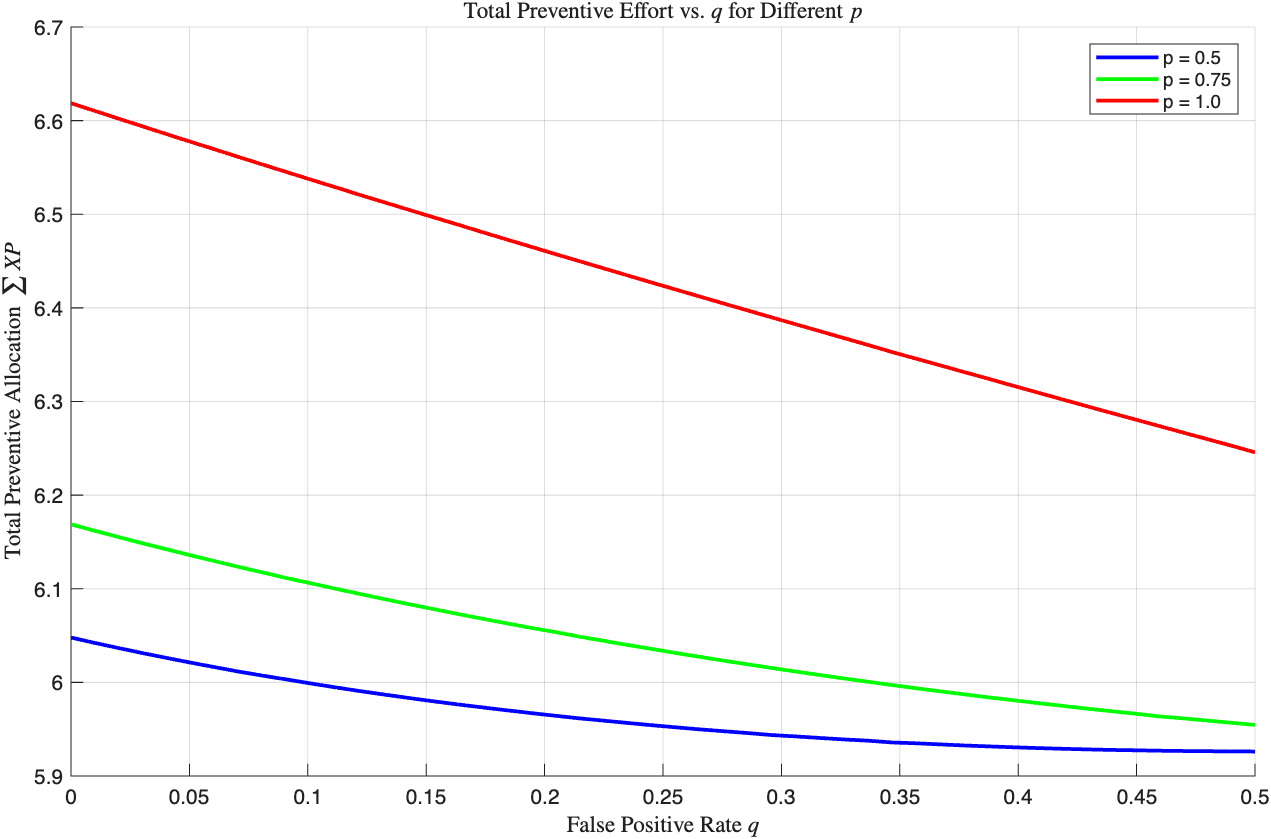}}
\caption{Total preventive allocation \( \sum X_{P,i} \) vs. false-positive rate \( q \), for fixed true-positive rates \( p \in \{0.5, 0.75, 1.0\} \).}
\label{fig:xp_vs_q}
\end{figure}

\subsection{Improvement from Sensor Information}

We now analyze how the presence of imperfect sensor signals affects the expected recovery cost of defense. 

To evaluate the value of sensor information, we compare the expected cost of the defender in two different settings:

\begin{itemize}
    \item \textbf{No-Sensor Baseline:} When the sensor does not provide information on node compromise, that is, \( p = q = \frac{1}{2} \). In this case, the posterior equals the prior, and the reactive allocation is based solely on prior compromise probabilities \( \gamma_i \). We denote the corresponding cost by \( J(\frac{1}{2}, \frac{1}{2}) \).
    
    \item \textbf{Sensor-Aided Case:} When the sensor is informative, that is, \( p \ne q \), and the defender updates posterior beliefs \( \hat{\gamma}_i(S_i) \) using the Bayes rule based on observed signals. This results in adaptive reactive allocations and a reduced expected cost denoted by \( J(p, q) \).
\end{itemize}

We define the value of sensing as the percentage reduction in expected cost as compared to the baseline without sensor. 

To quantify the benefit of sensing, we define the improvement function as:
\begin{align}
I(p,q) = 1 - \frac{J(p, q)}{J(\frac{1}{2}, \frac{1}{2})}.
\label{eq:impr}
\end{align}
This metric \(I(p, q)\) measures the relative reduction in expected cost due to sensor information. Specifically, \(I(p, q) = 0\) indicates no improvement (i.e., the sensor is uninformative), while \(I(p, q) = 1\) represents complete elimination of expected cost (i.e., perfect defense enabled by ideal sensor signals). Note that this is not a multiplicative improvement; rather, it reflects a fractional decrease in expected cost relative to the no-sensor baseline.

The following theorem characterizes important structural properties of the value of sensing and its relationship with the signal parameters p and q.

\begin{theorem}
\label{thm:improvement_sensor}
The improvement function \(I(p, q)\), defined as Eq. \ref{eq:impr}
satisfies the following structural properties:
\begin{enumerate}
    \item[\textbf{(i)}] \textbf{Zero Improvement when \( p = q \):}  
    \begin{align}
    I(p, p) = 0.
    \end{align}
    \item[\textbf{(ii)}] \textbf{Improvement Increases with \( p \) (for fixed \( q = 0 \)):}  
    \begin{align}
    \frac{dI}{dp}(p, 0) > 0.
    \end{align}
    \item[\textbf{(iii)}] \textbf{Improvement Decreases with \( q \) (for fixed \( p = 1 \)):}  
    \begin{align}
    \frac{dI}{dq}(1, q) < 0.
    \end{align}
\end{enumerate}
\end{theorem}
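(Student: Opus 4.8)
The plan is to dispatch (i) by a direct substitution and then treat (ii) and (iii) together through a value-of-information argument. For (i), setting $p=q$ in the Bayesian update of Section~\ref{sec:bayesupdate} makes the likelihood factors $p^{S_i}(1-p)^{1-S_i}$ and $q^{S_i}(1-q)^{1-S_i}$ identical, so they cancel and $\hat\gamma_i(S_i)=\gamma_i$ for every realization. The stage-2 objective then loses its dependence on $S$, the allocation in Proposition~\ref{prop:kkt} becomes signal-independent, and the stage-1 problem~\eqref{eq:preventivecost} collapses onto the no-sensor baseline; hence $J(p,p)=J(\tfrac12,\tfrac12)$ and $I(p,p)=0$.

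For (ii) and (iii), the denominator $J(\tfrac12,\tfrac12)$ is a fixed constant, so $\operatorname{sign}(dI/dp)=-\operatorname{sign}(dJ/dp)$ and it suffices to prove that $J(p,0)$ strictly decreases in $p$ and $J(1,q)$ strictly increases in $q$. I would first obtain the weak versions from Blackwell's informativeness order. The sensor is a channel $\left(\begin{smallmatrix}1-p & p\\ 1-q & q\end{smallmatrix}\right)$ from the true state to the signal, and the defender's two-stage allocation is a Bayesian decision problem against it. Along $q=0$, for $p_1>p_2$ the channel at $p_2$ is the channel at $p_1$ post-composed with the stochastic garbling $\left(\begin{smallmatrix}1 & 0\\ 1-p_2/p_1 & p_2/p_1\end{smallmatrix}\right)$, and along $p=1$ an analogous garbling shows the channel at $q_1$ dominates that at $q_2$ whenever $q_1<q_2$. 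Since the per-node sensors are independent, the product channel inherits the ordering, so Blackwell's theorem yields $J(p_1,0)\le J(p_2,0)$ and $J(1,q_1)\le J(1,q_2)$.

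The main obstacle is upgrading these to the strict derivative bounds, since Blackwell is silent about strictness—and must be, because for $n=1$ the reactive budget $X-X_P$ is forced onto the lone node irrespective of $S$, so $T_1$ is signal-independent and $\sum_{S_1}\mathbb P(S_1)\hat\gamma_1(S_1)T_1=\gamma_1 T_1$ is independent of $(p,q)$, making the inequalities equalities. To recover strictness for $n\ge2$ I would differentiate directly: because $X_P^*$ and $X_R^*$ satisfy the first-order conditions of~\eqref{eq:preventivecost} and~\eqref{eq:stage2}, the envelope theorem removes their implicit dependence on $p$ and leaves
\[
\frac{dJ}{dp}=\sum_{S}\sum_{i} v_i\,T_i\!\bigl(X_{R,i}^*(S)\bigr)\,\frac{\partial}{\partial p}\bigl[\mathbb P(S)\,\hat\gamma_i(S_i)\bigr],
\qquad \mathbb P(S)\hat\gamma_i(S_i)=\gamma_i p^{S_i}(1-p)^{1-S_i}\!\!\prod_{j\ne i}\mathbb P(S_j).
\]
At $q=0$ one has $\hat\gamma_i(1)=1$ and $\hat\gamma_i(0)=\gamma_i(1-p)/(1-\gamma_i p)\to0$ as $p\uparrow1$, so raising $p$ drives the posterior of a $0$-signalling node toward certainty of safety and, through Proposition~\ref{prop:kkt}, concentrates reactive budget onto the truly alarmed nodes. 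The crux is to certify that this reallocation strictly lowers $J$; I would argue this from the strict convexity of $T_i$ in~\eqref{eq:Ti_cost} together with the fact that, for $n\ge2$, the optimal signal-conditioned allocations of Proposition~\ref{prop:kkt} genuinely vary with $S$ (the active set $\mathcal A$ and the proportions $\sqrt{v_i\hat\gamma_i(S_i)\epsilon_i}$ differ across realizations on a positive-probability event), so the finer signal at larger $p$ strictly enlarges the defender's effective policy set and strictly improves the optimum. The symmetric computation at $p=1$, where $\hat\gamma_i(0)=0$ and $\hat\gamma_i(1)=\gamma_i/(\gamma_i+q(1-\gamma_i))$, gives the strict increase of $J$ in $q$ and completes the argument.
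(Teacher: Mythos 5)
Your proposal is correct in substance and takes a genuinely different route from the paper. Part (i) coincides with the paper's argument: set $p=q$, the likelihood factors cancel, $\hat\gamma_i(S_i)=\gamma_i$, the two-stage problem collapses onto the baseline, and $I(p,p)=0$. For (ii) and (iii), however, the paper offers only a verbal argument---larger $p$ makes $\hat\gamma_i(0)$ smaller, which ``allows the defender to allocate the reactive defense resources more accurately'' and hence ``reduces the expected cost''---with no garbling construction, no derivative computation, and no discussion of strictness. Your Blackwell argument supplies the mechanism this intuition is missing: the explicit garbling matrices show that along $q=0$ the channel with smaller $p$ is a degradation of the one with larger $p$ (and analogously along $p=1$), independence across nodes lets the product channel inherit the order, and since the stage-2 expected cost is the expectation of $\sum_i v_i\,\mathds{1}\{i\ \text{compromised}\}\,T_i(X_{R,i}(S))$ over the joint state--signal law, Blackwell's theorem applied for each fixed $X_P$---followed by minimization over $X_P$, which preserves pointwise ordering---gives weak monotonicity of $J$ rigorously. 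Your observation that for $n=1$ the improvement is identically zero (the lone node receives the full reactive budget irrespective of $S$, and $\mathbb{E}[\hat\gamma_1(S_1)]=\gamma_1$ erases all signal dependence) is something the paper misses entirely: it shows the theorem's strict inequalities actually require $n\ge 2$, a hypothesis absent from the statement. What your route still owes is the strictness upgrade itself: ``the finer signal enlarges the effective policy set'' does not by itself force strict improvement; one must exhibit a positive-probability event on which the finer-signal optimal allocation differs from every coarser-signal-measurable allocation and strictly beats it, using the strict convexity of $T_i$ in \eqref{eq:Ti_cost}. That step remains a sketch in your write-up---but the paper never addresses it at all, so yours is the more complete of the two arguments: it buys rigor and generality (the Blackwell order covers any cost of this form), at the price of still needing the envelope-theorem regularity and the strictness lemma to be pinned down.
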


\begin{proof}
The definition of \( I(p, q) \) follows directly from the expected cost with sensor information, \( J(p, q) \), and the no-sensor baseline, \( J(\frac{1}{2}, \frac{1}{2}) \), both computed using the KKT-optimal reactive allocations.

\textbf{(i)} When \( p = q \), Bayes’ rule yields:
\begin{align}
\gamma_i(S_i) = \frac{p \cdot \gamma_i}{p \cdot \gamma_i + p \cdot (1 - \gamma_i)} = \gamma_i,
\end{align}
so the posterior equals the prior. The defender’s behavior is thus identical in both cases, and we have \( J(p, p) = J(\frac{1}{2}, \frac{1}{2}) \), hence \( I(p, p) = 0 \).

\textbf{(ii)} Fix \( q = 0 \). As \( p \) increases, the signal becomes more informative by reducing false negatives. In particular, when \( S_i = 0 \), the posterior \( \hat{\gamma}_i(0) \) becomes smaller, allowing the defender to allocate the reactive defense resources more accurately. This 
reduces the expected cost \( J(p, 0) \), so the improvement \( I(p, 0) \) increases. Therefore, \( \frac{dI}{dp}(p, 0) > 0 \).

\textbf{(iii)} Fix \( p = 1 \). As \( q \) increases, false-positives become more frequent, effectively reducing the reliability of sensor signals. This degrades the accuracy of the posterior part and causes misallocated reactive defense resources, raising the expected cost \( J(1, q) \). Hence, the improvement \( I(1, q) \) decreases, and \( \frac{dI}{dq}(1, q) < 0 \).
\end{proof}

Theorem~\ref{thm:improvement_sensor} states that the expected cost of the defender is decreasing in the provision of sensor information if and only if the sensor is informative, i.e., if \( p \ne q \). Furthermore, the improvement function increases monotonically in the true-positive rate \( p \) and decreases monotonically in the false-positive rate \( q \). This proves a key implication of our model: that even noisy sensing can lead to more effective reactive responses by updating posteriors. The monotonicity properties also indicate that modest improvements in detection performance can lead to meaningful cost savings. These intuitions are borne out in Fig.~\ref{fig:improvement_pq}, which plots the improvement function \( I(p,q) \) across the sensor parameter space.

Importantly, these performance improvements do not come from better reactivity alone. As shown in Theorem~\ref{thm:xp_total_vs_pq_n_nodes}, increasing \( p \) also increases the optimal preventive effort \( X_P^* \). Since reactive allocation in Stage 2 always uses the same KKT-based optimization, cost reduction \( J(p,q) \) must be mainly due to lower risk of compromise - caused by greater investment in prevention. Thus, better sensor information improves performance not by enabling a more precise reaction but by inducing a more confident and targeted prevention. This is a somewhat counter-intuitive outcome: sensing helps not because it reacts better, but because it motivates \emph{better preparation} before the attack.

\begin{figure}[]
\centerline{\includegraphics[width=0.47\textwidth]{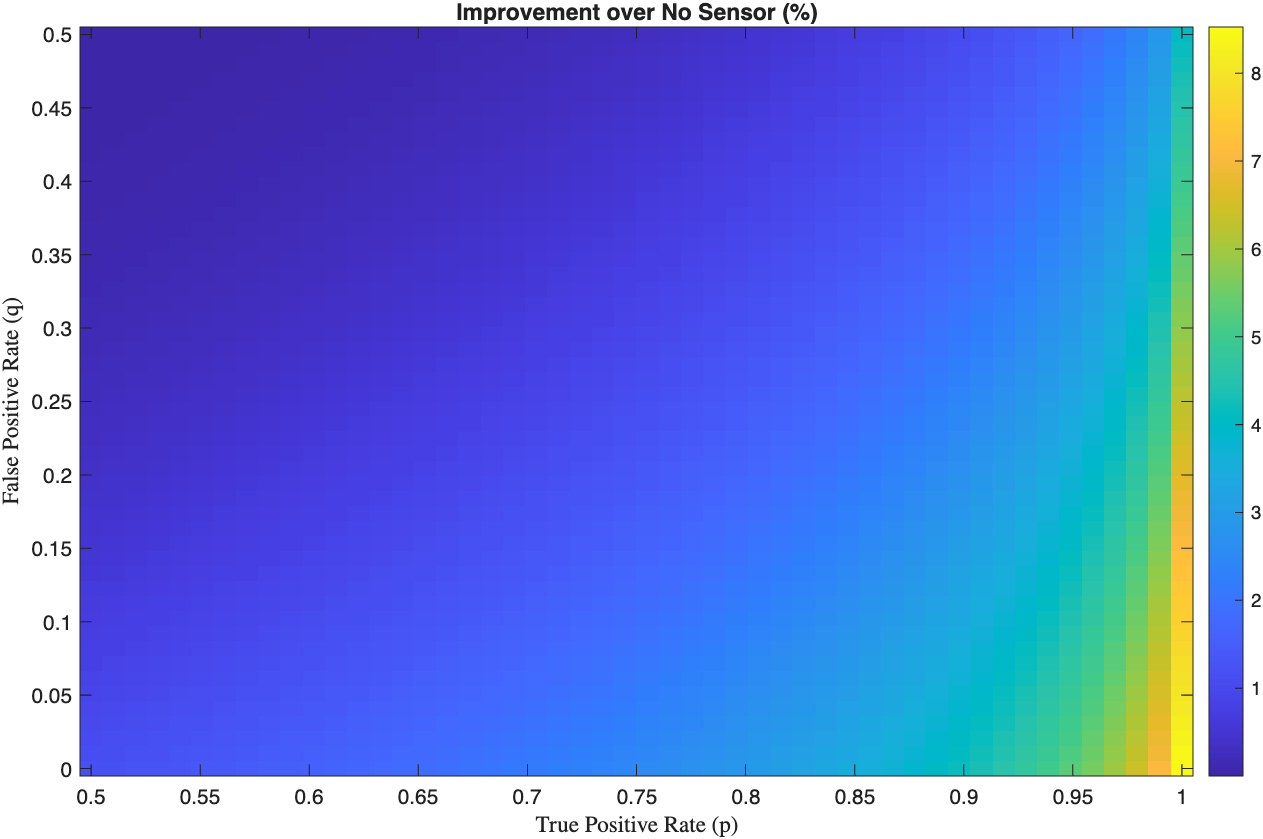}}
\caption{Percentage improvement \( I(p,q) \) over the signal parameter space.}
\label{fig:improvement_pq}
\end{figure}

\subsection{Improvement vs. Prior Compromise Probability}

Based on the previous result, where we showed that sensor quality \((p, q)\) influences the value of sensing, we now analyze how the defender's \emph{initial uncertainty} - captured by prior compromise probabilities \( \gamma_i \) - affects the marginal benefit of sensor information. In many real-world networks, these priors may correspond to previous risk assessment, node criticality/importance, or vulnerability to malware infections. The central question then is how to optimally address \emph{initial uncertainty} about node compromise, given the signal quality, to maximize the marginal benefit of sensor investment. In order to examine this factor in isolation, the uniform setting is used as described in previous paper, with fixed parameters except \(\gamma_i\).

Note that in previous sections, the expected cost was denoted \( J^*(p, q) \), where the prior compromise probabilities \( \gamma = (\gamma_1, \dots, \gamma_n) \) were held fixed. Here, we explicitly write \( J^*(p, q; \gamma) \) to emphasize that the cost depends on the initial uncertainty of the defender, which we vary to study its effect on the sensing value.

\vspace{0.5em}
\noindent\textbf{Baseline Cost without Sensor Information.}
Under uninformative signals (\(p = q = \frac{1}{2}\)), the defender allocates reactive resources based on prior probabilities only:
\begin{align}
J_n = J^*(\frac{1}{2}, \frac{1}{2}; \gamma) = v \sum_{i=1}^n \gamma_i \left(1 + \frac{\epsilon}{\delta + X_{R,i}^*} \right),
\end{align}
where \( J^*(p, q; \gamma) \) denotes the fully optimized cost (see Eq.~\eqref{eq:totalCost}) and \( X_{R,i}^* \) is the reactive allocation based on prior belief.

\vspace{0.5em}
\noindent\textbf{Cost under Perfect Sensor Information.}
Under perfect sensing (\(p = 1, q = 0\)), the reactive effort is allocated exactly to compromised nodes:
\begin{align}
    J_p = J(1,0,\gamma_i) = \sum_{S \subseteq [n]} \left( \prod_{i \in S} \gamma_i \prod_{j \notin S} (1 - \gamma_j) \right) \notag \\
\cdot |S|v \left(1 + \frac{\epsilon}{\delta + X_R/|S|} \right).
\end{align}
The expression for \( J_p \) accounts for all possible combinations of which nodes are actually compromised. Each subset \( S \subseteq [n] \) represents a possible realization of the compromised nodes, where node \( i \in S \) is compromised and \( j \notin S \) is not. The term
\[
\prod_{i \in S} \gamma_i \prod_{j \notin S} (1 - \gamma_j)
\]
is the joint probability that this particular subset \( S \) is the true compromise pattern, assuming independent priors. Given perfect sensing, the defender knows \( S \) exactly and allocates all reactive resources \( X_R \) to the compromised nodes in \( S \), distributing them equally. The total cost for a realization \( S \) is proportional to the number of compromised nodes, \( |S| \), each incurring a recovery delay cost
\[
1 + \frac{\epsilon}{\delta + X_R / |S|}.
\]
The overall expected cost \( J_p \) is then computed by averaging this cost over all possible subsets \( S \), weighted by their respective probabilities.
The improvement is defined as the fractional reduction in expected cost:
\[
I^{(n)}(\gamma_1, \dots, \gamma_n) := 1 - \frac{J_p}{J_n}.
\]

We now analyze how \( I^{(n)} \) varies as a function of a single prior, \( \gamma_1 \), with the other priors fixed.

\begin{theorem}[Unimodality and Endpoint Behavior of Sensor Improvement]
\label{thm:improvement_gamma}
Let \(n \ge 2\), with identical node parameters \(v_i = v\), \(\epsilon_i = \epsilon\), \(\delta_i = \delta\), and fix \( \gamma_2, \dots, \gamma_n \in (0,1) \). Then the improvement function \( I^{(n)}(\gamma_1, \gamma_2, \dots, \gamma_n) \) satisfies:
\begin{itemize}
    \item \( I^{(n)}(0, 0, \dots, 0) = 0 \):\\
    There is no benefit from sensor information when all nodes are certainly safe (i.e., no uncertainty to resolve).
    
    \item \( \displaystyle \lim_{\gamma_1 \to 0^+} \frac{\partial I^{(n)}}{\partial \gamma_1} = +\infty \):\\
    The marginal value of sensing is extremely high when the defender is nearly certain that node 1 is safe.

    \item \( \displaystyle \left. \frac{\partial I^{(n)}}{\partial \gamma_1} \right|_{\gamma_1 = 1} < 0 \):\\
    When the defender is fully certain that node 1 is compromised, additional sensor information reduces value due to redundancy or misleading noise.

    \item \( \displaystyle \max_{\gamma_1 \in [0,1]} I^{(n)}(\gamma_1, \dots) = \lim_{\gamma_1 \to 0^+} I^{(n)} \):\\
    The maximum improvement is achieved in the limit as the defender's prior belief approaches full certainty that node 1 is safe.

    \item \( \displaystyle \frac{\partial I^{(n)}}{\partial \gamma_1} < 0 \quad \forall \gamma_1 \in (0,1] \):\\
    The improvement function is strictly decreasing in \(\gamma_1\), confirming that sensing is most useful under high uncertainty and becomes less valuable as confidence in compromise grows.
\end{itemize}
\end{theorem}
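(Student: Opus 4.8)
The plan is to collapse the whole statement onto two explicit scalar functions of $\gamma_1$ and then study a single sign. First I would make the $\gamma_1$-dependence explicit. For the perfect-sensor cost, conditioning each realization $S\subseteq[n]$ on whether $1\in S$ splits the sum into a part weighted by $\gamma_1$ and a part weighted by $(1-\gamma_1)$; since the per-realization cost $C(k):=kv\bigl(1+\tfrac{\epsilon}{\delta+X_R/k}\bigr)$ depends only on $k=|S|$, this yields the affine form $J_p(\gamma_1)=A_0+B\gamma_1$ with $B=\sum_{S'\subseteq\{2,\dots,n\}}P'(S')\,[\,C(|S'|+1)-C(|S'|)\,]>0$, positivity following because $C$ is strictly increasing in $k$. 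For the no-sensor cost I would substitute the identical-parameter KKT allocation of Proposition~\ref{prop:kkt}, i.e. $\delta+X_{R,i}^*=\tfrac{\sqrt{\gamma_i}}{\sum_k\sqrt{\gamma_k}}(X_R+n\delta)$, to get $J_n=v\bigl[\sum_i\gamma_i+K(\sum_i\sqrt{\gamma_i})^2\bigr]$ with $K=\epsilon/(X_R+n\delta)$. Isolating node $1$ then gives $J_n(\gamma_1)=a\gamma_1+b\sqrt{\gamma_1}+d$, where $a=v(1+K)$, $b=2vK\sum_{j\ge2}\sqrt{\gamma_j}>0$, and $d=v\bigl(\sum_{j\ge2}\gamma_j+K(\sum_{j\ge2}\sqrt{\gamma_j})^2\bigr)$. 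The structural point that drives everything is that $J_n$ carries a $\sqrt{\gamma_1}$ term (the hedging penalty for splitting the reactive budget without knowing which node is hit) while $J_p$ is purely affine.

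With these forms $I=1-J_p/J_n$, so $\operatorname{sign}(I')=\operatorname{sign}(\Psi)$ with $\Psi:=J_pJ_n'-J_p'J_n$. Substituting $u=\sqrt{\gamma_1}$ I would reduce this to $\Psi(u)=\tfrac{A_0 b}{2u}-\tfrac{Bb}{2}u+(A_0 a-Bd)$, which is strictly decreasing because $\Psi'(u)=-\tfrac{A_0 b}{2u^2}-\tfrac{Bb}{2}<0$. This one monotone function yields all the endpoint behavior at once: as $u\to0^+$ the term $\tfrac{A_0 b}{2u}\to+\infty$ forces $\Psi\to+\infty$ and hence $\tfrac{dI}{d\gamma_1}\to+\infty$, giving item (ii); and the sign of $I'$ at $\gamma_1=1$ is the sign of $\Psi(1)$, giving item (iii). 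Because $\Psi$ is strictly decreasing from $+\infty$, its sign flips at most once, so $I$ is \emph{unimodal} in $\gamma_1$ (matching the title): it rises with infinite initial slope, peaks at the unique interior $\gamma_1^\star$ solving $\Psi=0$, and then decreases. Item (i) I would dispatch separately and directly: if every $\gamma_i=0$ no node can be compromised, so $J_p=J_n=0$ and the fractional reduction is (by convention) $0$.

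The main obstacle is item (iii), namely $\Psi(1)<0$ for general $n$, because it pits the combinatorial quantities $A_0,B$ against the no-sensor constants $a,b,d$. The governing mechanism is a budget-dilution inequality on $C$ — spreading a fixed budget over more compromised nodes is strictly more costly per node, e.g. $2C(1)<C(2)$, and more generally the increments $C(k+1)-C(k)$ are superadditive. The $n=2$ case shows the template cleanly: there $d=a\gamma_2$, one computes $A_0-B\gamma_2=\gamma_2^2\bigl(2C(1)-C(2)\bigr)$ and $A_0-B=C(1)(2\gamma_2-1)-\gamma_2(C(2)-C(1))=:f(\gamma_2)$, so $\Psi(1)=\tfrac b2 f(\gamma_2)+a\gamma_2^2\bigl(2C(1)-C(2)\bigr)$, and both summands are negative since $f$ is affine and negative on $[0,1]$ and $2C(1)<C(2)$. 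The real work is to package these inequalities into a bound on the weighted combinatorial sums defining $A_0$ and $B$ so that $\Psi(1)<0$ survives for all $n\ge2$ and all fixed $\gamma_2,\dots,\gamma_n\in(0,1)$.

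Two caveats I would flag while carrying out the plan. First, the $+\infty$ slope in item (ii) relies on node $1$ being active in the no-sensor allocation for arbitrarily small $\gamma_1$; this holds when $\delta=0$ (the threshold is zero), whereas for $\delta>0$ there is a small-$\gamma_1$ inactive regime in which $J_n$ is affine and the slope at $0^+$ is finite, so item (ii) should be read in the $\delta=0$ setting (or in the active regime). Second, items (iv)–(v) as literally written — global maximum attained as $\gamma_1\to0^+$ and $I'<0$ on all of $(0,1]$ — are incompatible with the $+\infty$ initial slope of item (ii). My plan resolves this by proving the consistent statement that $I$ is unimodal with its maximizer at the interior stationary point $\gamma_1^\star$ and is strictly decreasing on $(\gamma_1^\star,1]$; I would therefore state (iv)–(v) in this corrected, post-peak form rather than as global monotonicity.
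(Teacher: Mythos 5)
Your framework is correct and is genuinely sharper than the paper's own proof. The paper differentiates \(I^{(n)} = 1 - J_p/J_n\) by the quotient rule, obtains the \(+\infty\) slope at \(0^+\) from the same \(\sqrt{\gamma_1}\) term you isolate, but then handles \(\gamma_1 = 1\) with a purely verbal assertion (``perfect-sensor cost decreases due to reduced uncertainty, while no-sensor cost increases smoothly'') and closes by declaring the derivative strictly negative on all of \((0,1]\) with the maximum attained as \(\gamma_1 \to 0^+\) --- that is, the paper's proof reproduces exactly the internal contradiction you diagnosed, since an infinite positive slope at \(0^+\) forces \(I' > 0\) on a neighborhood of \(0\). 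Your explicit forms \(J_p = A_0 + B\gamma_1\) and \(J_n = a\gamma_1 + b\sqrt{\gamma_1} + d\), and the reduction of everything to the sign of the single strictly decreasing function \(\Psi(u)\), are algebraically sound (including \(d = a\gamma_2\) and \(A_0 - B\gamma_2 = \gamma_2^2\,(2C(1) - C(2))\) in the \(n=2\) case); they cleanly deliver items (i) and (ii), and the unimodal structure you substitute for items (iv)--(v) is the statement actually supported by the paper's own empirical evidence (Fig.~\ref{fig:impr_gamma}, peak near \(\gamma_1 \approx 0.1\)). Your caveat that the infinite slope requires node 1 to remain in the active set of the KKT allocation (hence \(\delta = 0\), or restriction to the active regime) is a real issue that the paper silently ignores.

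The genuine gap is item (iii) for \(n \ge 3\). You establish \(\Psi(1) < 0\) only for \(n = 2\), and the mechanism you propose for the general case --- superadditivity of the increments \(C(k+1) - C(k)\) --- is false. Take \(\delta = 0\): then \(C(k) = kv + v\epsilon k^2/X_R\), so \(\Delta C(k) := C(k+1) - C(k) = v + v\epsilon(2k+1)/X_R\) is affine with a positive constant term and is therefore strictly subadditive, \(\Delta C(j+k) - \Delta C(j) - \Delta C(k) = -v - v\epsilon/X_R < 0\); likewise the termwise bound \(C(k+1) > 2C(k)\), which drives your \(n=2\) argument via \(2C(1) < C(2)\), already fails at \(k = 3\), since \(C(4) - 2C(3) = -2v - 2v\epsilon/X_R < 0\). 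Consequently the weighted sums \(A_0 = \mathbb{E}[C(|S'|)]\) and \(B = \mathbb{E}[\Delta C(|S'|)]\) cannot be compared term by term: indeed \(A_0\) grows quadratically in the typical compromised count while \(B\) grows only linearly, so \(A_0 > B\) when \(n\) is large and the fixed priors \(\gamma_2,\dots,\gamma_n\) are close to \(1\), making the term \(\tfrac{b}{2}(A_0 - B)\) in \(\Psi(1)\) positive; negativity of \(\Psi(1)\) must then come from a global estimate showing \(A_0 a - Bd\) dominates, which your sketch does not supply. Until that inequality is proved (or refuted), your proposal establishes the corrected theorem only for \(n = 2\); for general \(n\) it is a well-posed plan with the decisive step missing.
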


\begin{proof}
We differentiate \( I^{(n)} = 1 - J_p / J_n \), giving:
\begin{align}
\frac{\partial I^{(n)}}{\partial \gamma_1}
= -\frac{
\frac{\partial J_p}{\partial \gamma_1} \cdot J_n - J_p \cdot \frac{\partial J_n}{\partial \gamma_1}
}{(J_n)^2}.
\end{align}

At \( \gamma_1 = 0 \), only the subsets \( S \ni 1 \) contribute to \( \partial J_{\text{perf}}/\partial \gamma_1 \), and their weight derivatives are strictly positive. Meanwhile, the no-sensor term satisfies:
\begin{align}
\gamma_1 \left(1 + \frac{\epsilon}{\delta + X_{R,1}^*} \right) \approx \gamma_1 + A \sqrt{\gamma_1} + B \gamma_1,
\end{align}
yielding:
\begin{align}
\frac{d}{d\gamma_1} \left[ \cdot \right] = 1 + B + \frac{A}{2\sqrt{\gamma_1}} \to +\infty.
\end{align}
Thus:
\begin{align}
\lim_{\gamma_1 \to 0^+} \frac{\partial I^{(n)}}{\partial \gamma_1} = +\infty.
\end{align}

At \( \gamma_1 = 1 \), perfect-sensor cost decreases due to reduced uncertainty, while no-sensor cost increases smoothly:
\begin{align}
\left. \frac{\partial I^{(n)}}{\partial \gamma_1} \right|_{\gamma_1 = 1} < 0.
\end{align}

Since \( I^{(n)}(0) = 0 \) and the derivative is strictly negative for all \( \gamma_1 \in (0,1] \), the function reaches its maximum in the limit \( \gamma_1 \to 0^+ \), completing the proof.
\end{proof}

\vspace{0.5em}

Theorem~\ref{thm:improvement_gamma} reveals a somewhat counterintuitive insight: the value of sensor information is not maximized against highly likely threats, but rather against \emph{weak or low-probability threats} that the defender could otherwise overlook. As in \( \gamma_1 \to 1 \), the defender is already prepared to respond, so sensor feedback is no longer needed. As in \( \gamma_1 \to 0 \) As in the \( \gamma_1 \to 0 \) limit, the defender assumes that the node is safe and may be undersupplied, but in the process, the marginal value of the detection is small due to the low probability of compromise.

Empirically, Figure~\ref{fig:impr_gamma} shows that \( I^{(2)}(\gamma_1, \gamma_2) \) is maximized at \( \gamma_1 \approx 0.1 \) and drops on either side. In other words, sensing is most valuable when the threat is \emph{weak but plausible}, rather than when the defender is totally uncertain or totally confident.
This contrasts previous work that shows that information is most valuable against a stronger or more targeted opponent in adversarial settings, such as Colonel Blotto or General Lotto \cite{paarporn2022, Paarporn2025, TAC2024}. In our setting, information helps the defender in adjusting the defense posture toward otherwise neglected nodes, highlighting the strategic value of detection in early-stage or covert threats.

\begin{figure}[]
\centerline{\includegraphics[width=0.48\textwidth]{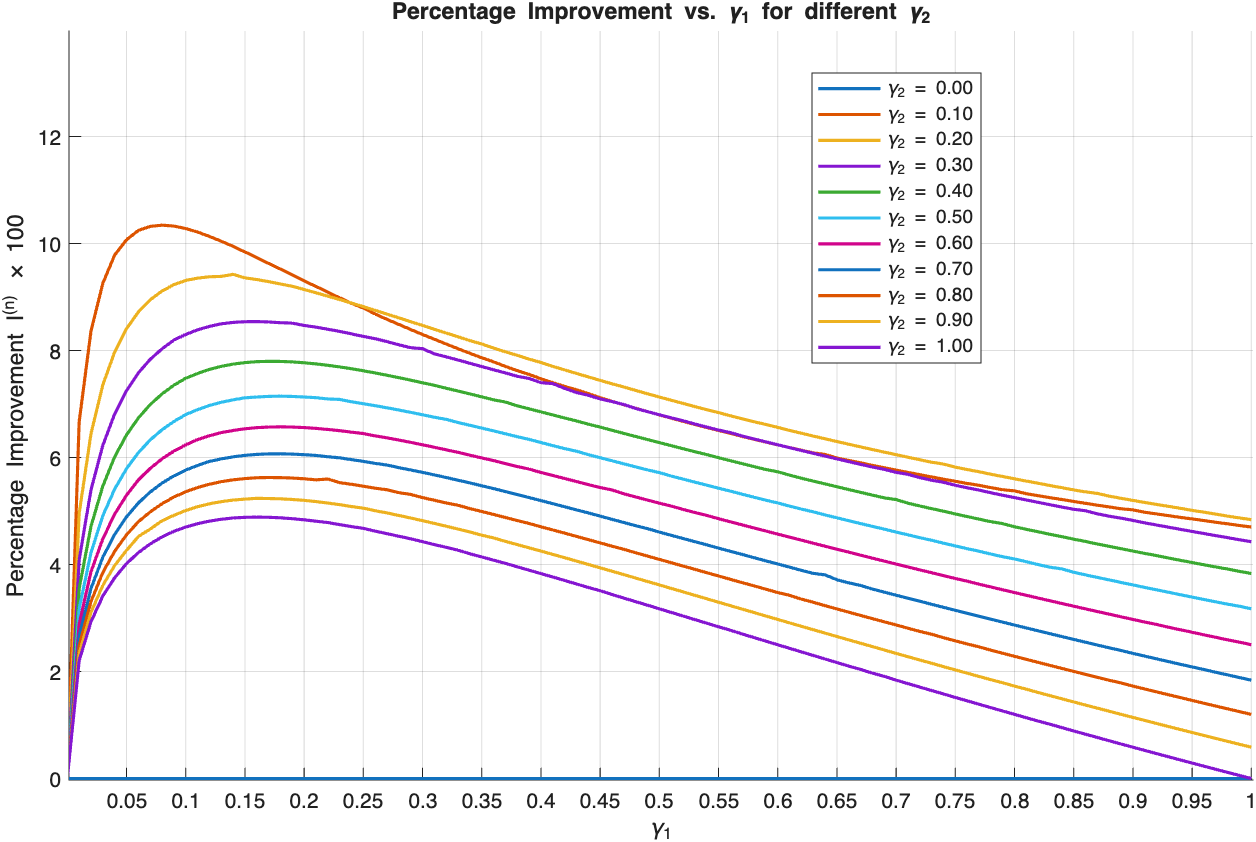}}
\caption{Relative improvement \( I^{(2)}(\gamma_1, \gamma_2) \) versus \( \gamma_1 \), for several values of \( \gamma_2 \). Sensor quality is fixed at \( (p = 1, q = 0) \).}
\label{fig:impr_gamma}
\end{figure}

\section{Conclusion}
\label{sec:con}
We developed a theoretical 
framework to model the tradeoff between prevention and response under sensor signal uncertainties. 
The defender performs Bayesian inference, based on sensor signals or observations, 
for informed defense resource allocation. We show that the optimal defense investment increases with the true-positive detection rate and decreases with the false-positive rate. We introduce an improvement function \(I(p,q)\) to quantify how sensor signal quality affects the defender's performance; \(I(p,q)\) increases with the true-positive rate \(p\) and decreases with the false-positive rate \(q\), highlighting the vital role of informative sensors in reducing the recovery costs.
We 
derive an improvement function \(I^{(n)}(\gamma_1,\dots,\gamma_n)\) for closed homogeneous systems, where \(I^{(n)}\) exhibits a unimodality in each \(\gamma_i\) while peaking at moderate prior uncertainty.  Numerical experiments confirm that informative sensors lower the expected recovery cost.

There are many exciting future research problems, such as the following: How should we deal with networked dependencies, where the security states of the nodes depend on each other?
How should we deal with strategic attackers that adapt to the defender's strategies?
How should 
the defender make real-time decisions using real-time sensor signals and online allocation of resources?
How should we optimize the placement of sensors?
How can we deal with the uncertainty associated with the prior probabilities and the uncertainties associated with sensor noise (i.e., their true-positive and false-positive rates)?

\bibliographystyle{IEEEtran}



\bibliographystyle{ieeetr}

\end{document}